\pgfplotsset{compat=1.3}
\newtheorem{Problem}{Problem}
\newtheorem{Theorem}{Theorem}
\newcommand{\letterref}[2]{\hyperref[#1]{\ref*{#1}#2}}
\def\ER{Erd\H{o}s-R\'{e}nyi }
\def\BA{Barab\'{a}si-Albert }
\title{Blind inference of centrality rankings from graph signals}
\name{T. Mitchell Roddenberry, Santiago Segarra\thanks{Emails: \href{mailto:mitch@rice.edu}{mitch@rice.edu}, \href{mailto:segarra@rice.edu}{segarra@rice.edu}.}}
\address{Department of Electrical and Computer Engineering, Rice University}
\begin{document}

\ninept


\maketitle

\begin{abstract}
  We study the blind centrality ranking problem, where our goal is to infer the eigenvector centrality ranking of nodes solely from nodal observations, i.e., without information about the topology of the network.
  We formalize these nodal observations as graph signals and model them as the outputs of a network process on the underlying (unobserved) network.
  A simple spectral algorithm is proposed to estimate the leading eigenvector of the associated adjacency matrix, thus serving as a proxy for the centrality ranking.
  A finite rate performance analysis of the algorithm is provided, where we find a lower bound on the number of graph signals needed to correctly rank (with high probability) two nodes of interest.
  We then specialize our general analysis for the particular case of dense \ER graphs, where existing graph-theoretical results can be leveraged.
  Finally, we illustrate the proposed algorithm via numerical experiments in synthetic and real-world networks, making special emphasis on how the network features influence the performance.
\end{abstract}

\begin{keywords}
  graph signal processing, eigenvector centrality, spectral methods.
\end{keywords}

\section{Introduction}\label{sec:intro}

As the prevalence of complex, structured data has exploded in recent years, so has the analysis of such data using graph-based representations \cite{Strogatz2001,Newman2010,Jackson2010}.
Indeed, abstracting relational structures as graphs has become an ever-increasing paradigm in science and engineering.
In this context, network science aims to understand such structures, often via analysis of their algebraic properties when represented as matrices.

In any network, the topology determines an influence structure among the nodes or agents. 
Identifying the most important nodes in a network helps in explaining the network’s dynamics, e.g., migration in biological networks~\cite{garroway_2008_applications}, as well as in designing optimal ways to externally influence the network, e.g., vulnerability to attacks~\cite{holme_2002_attack}.
Node centrality measures are tools designed to identify such important agents. However, node importance is a rather vague concept and can be interpreted in various ways, giving rise to multiple coexisting centrality measures, some of the most common being
closeness~\cite{beauchamp_1965_improved}, betweenness~\cite{freeman_1977_set, segarra_2016_stability} and eigenvector~\cite{bonacich_1972_factoring} centrality.
In this latter measure -- of special interest to this paper -- the importance of a node is computed as a function of the importance of its neighbors.

Computation of the eigenvector centrality requires complete knowledge of the graph being studied.
However, graphs are often difficult or infeasible to by fully observed, especially in large-scale settings.
In these situations, we might rely on data supported on the nodes (that we denominate \emph{graph signals}) to infer network properties.
In this paper, we seek to answer the question: \textit{Under what conditions can one rank the nodes of a graph according to their eigenvector centrality, without observing the edges of the graph but given only a set of graph signals?}

As a motivating example, consider the observation of opinions of individuals (graph signals) in a social network. Intuitively, even without having access to the social connections of this group of people, we might be able to determine the most central actors by tracing back the predominating opinions.
In this paper, we analyze in which cases this intuition holds.

\vspace{1mm}
\noindent 
{\bf Related work.} The standard paradigm for network inference from nodal observations aims to infer the complete graph structure, also known as \textit{network topology inference}.
Network topology inference has been studied from a statistical perspective where each node is a random variable, and edges reflect the covariance structure of the ensemble of random variables
\cite{friedman_2008_sparse,lake_2010_discovering,meinshausen_2006_high,egilmez_2017_graph,shen_2017_kernel}.
Additionally, graph signal processing methods have arisen recently, which infer network topology by assuming the observed signals are the output of some underlying network process \cite{dong2019tutorial,kalofolias_smooth_2016,segarra_topo_2017,mateos_2019_connecting,zhu2019consensus}.
Our work differs from this line of research, in that we do not infer the network structure, but rather the centrality ranking of the nodes.
This latter feature, being a coarse description of the network, can be inferred with less samples than those needed to recover the detailed graph structure.

In this same spirit, recent work has considered the inference of coarse network descriptors from graph signals, but have exclusively focused on community detection.
More precisely, \cite{schaub2018blind,wai2018blind,wai2019hidden} provide algorithms and statistical guarantees for community detection on a single graph from sampled graph signals, whereas \cite{schaub2019partition,Hoffmann2018} analyze this problem for ensembles of graphs drawn from a latent random graph model.
Lastly, in terms of eigenvector centrality estimation from partial data, \cite{ruggeri2019centrality} considers the case of missing edges in the graph of interest, but does not rely on node data as we propose here.
In the direction of centrality inference from data, \cite{shao2017centrality} is the most similar to our work.
However, the authors are concerned with temporal data driven by consensus dynamics, which they use to infer an ad hoc temporal centrality. Our approach does not depend on temporal structure, and infers the classical eigenvector centrality instead.

\vspace{1mm}
\noindent 
{\bf Contributions.}
The contributions of this paper are threefold. First, we provide a simple spectral method to estimate the eigenvector centrality ranking of the nodes. Second, and most importantly, we provide theoretical guarantees for the proposed method. In particular, we determine the number of samples needed for a desired resolution in the centrality ranking.
Finally, we particularize our general theoretical analysis to the case of \ER graphs and showcase our findings via illustrative numerical experiments.

\section{Notation and background}

\noindent 
{\bf Graphs and eigenvector centrality.}
An undirected \emph{graph} ${\cal G}$ consists of a set ${\ccalV}$ of $n := |${\ccalV}$|$ nodes, and a set ${\ccalE}\subseteq{\ccalV}\times{\ccalV}$ of edges, corresponding to unordered pairs of elements in ${\cal V}$. 
We commonly index the nodes with the integers $1,2,\ldots,n$.
We then encode the graph structure with the (symmetric) adjacency matrix ${\bbA}\in{\mathbb R}^{n\times n}$, such that $A_{ij} = A_{ji} = 1$ for all $(i,j) \in \ccalE$, and $A_{ij}=0$ otherwise.

For a graph with adjacency matrix ${\bbA}$, we consider the eigenvector centrality~\cite{bonacich_1972_factoring} given by the leading eigenvector of ${\bbA}$.
That is, if ${\bbA}$ has eigenvalue decomposition ${\bbA}=\sum_{i=1}^n\lambda_i{\bbv}_i{\bbv}_i^\top$, where $\lambda_1\geq\lambda_2\geq\cdots\geq\lambda_n$, the eigenvector centrality of node $j$ is given by the $j^{\rm th}$ entry of ${\bbv}_1$.
For notational convenience, we denote this leading eigenvector by $\bbu$, so that the centrality value of node $j$ is given by $u_j$.

Often, we are not concerned with the precise centrality value of each node, but rather a rank ordering based on these values.
In this regard, we define the \textit{centrality rank} as
\begin{equation}
\label{eq:rank}
r_i=\big|\{j\in{\cal N}\colon u_j\geq u_i\}\big|,
\end{equation}
so the most central node has $r=1$ (in the absence of a tie) and the least central node has $r=n$.
Note that if two nodes have identical centrality values they will have the same centrality rank.
For any two nodes $i,j\in{\cal N}$, two rankings $r$ and $r'$ preserve the relative order of $i,j$ when $r_i\geq r_j \text{ if and only if } r'_i\geq r'_j$.
For convenience, we say that two vectors ${\bbu}$ and $\widehat{\bbu}$ preserve the relative order of two nodes, where it is implicit that this refers to the ranking induced by each vector.

\vspace{1mm}
\noindent
{\bf Graphs signals and graph filters.}
\textit{Graph signals} are defined as real-valued functions on the node set, i.e., $x\colon{\ccalV}\to{\reals}$.
Given an indexing of the nodes, graph signals can be represented as vectors in ${\reals}^n$, such that $x_i=x(i)$.
A \emph{graph filter} $\ccalH(\bbA)$ of order $T$ is a linear map between graph signals that can be expressed as a matrix polynomial in $\bbA$ of degree $T$,
\begin{equation}\label{eq:graphfilter}
{\ccalH}\left({\bbA}\right)=\sum_{k=0}^T\gamma_k{\bbA}^k:=\sum_{i=0}^T{\ccalH}(\lambda_i){\bbv}_i{\bbv}_i^\top,
\end{equation}
where ${\ccalH}(\lambda)$ is the extension of the matrix polynomial ${\cal H}$ to scalars.
Properly normalized and combined with a set of appropriately chosen filter coefficients $\gamma_k$, the transformation ${\cal H}\left({\bbA}\right)$ can account for a range of interesting dynamics including consensus~\cite{Olfati-Saber2007}, random walks, and diffusion~\cite{Masuda2017}.

\section{Blind centrality inference}\label{sec:bci}

Our goal is to infer the centrality rankings of the nodes without ever observing the graph structure, but rather exclusively relying on the observation of graph signals. 
Naturally, these graph signals must be shaped by the underlying unknown graph for the described inference task to be feasible.
In particular, we model the observed graph signals $\{\bby_i\}_{i=1}^N$ as being the output of graph filters excited by (unobserved) white noise, i.e., 
\begin{equation}\label{eq:graphfilter_2}
\bby_i = {\ccalH}\left({\bbA}\right) \bbw_i = \sum_{k=0}^T\gamma_k{\bbA}^k \bbw_i,
\end{equation}
where $\mathbb{E}[\bbw]={\bm 0}$, $\mathbb{E}[\bbw \bbw^\top]={\bbI}$, and $\gamma_k \geq 0$ for all $k$.
The rationale behind the inputs $\bbw_i$ being white is that we are interested in a situation where the original signal (e.g., the initial opinion of each individual in a social network) is uncorrelated among agents. In this way, the correlation structure of the output is exclusively imposed by exchanges among neighboring agents as opposed to being driven by features that are exogenous to the graph. Regarding the form of the filter ${\ccalH}\left({\bbA}\right)$, we only require non-negative coefficients $\gamma_k \geq 0$. 
Since successive powers of $\bbA$ aggregate information in neighborhoods of increasing sizes, $\gamma_k \geq 0$ simply imposes the notion that agents are positively influenced by the signals in their neighborhoods, e.g., in a social network the opinions of individuals grow similar to those in their surroundings.

With this notation in place, we formally state our problem:

\begin{Problem}\label{Prob:main}
Given the observation of $N$ signals $\{\bby_i\}_{i=1}^N$ generated as in~\eqref{eq:graphfilter_2}, estimate the node centrality ranking induced by $\bbu$, the leading eigenvector of the (unobserved) adjacency matrix $\bbA$.
\end{Problem}

Our proposed method to solve Problem~\ref{Prob:main} is summarized in Algorithm~\ref{alg:bci}, where we simply compute the sample covariance $\widehat{\bbC}_y^N$ of the observed outputs and use the ranking induced by its leading eigenvector as a proxy for the true centrality ranking. 
To see why this simple spectral method is reasonable, notice that the (population) covariance of the i.i.d. outputs $\bby_i$ is given by 
\begin{equation}\label{eq:covariance}
\bbC_y = \mathbb{E}[{\ccalH}\left({\bbA}\right) \bbw_i \bbw^\top_i {\ccalH}\left({\bbA}\right)^\top] = {\cal H}({\bbA})^2,
\end{equation}
where we used the facts that $\bbw_i$ is white and $\bbA$ is symmetric. 
Thus, from \eqref{eq:covariance} it follows that $\bbA$ and $\bbC_y$ share the same set of eigenvectors.
Moreover, since $\gamma_k \geq 0$ in \eqref{eq:graphfilter_2}, it must be the case that $\bbu$ is also the \emph{leading} eigenvector of $\bbC_y$. 
In this way, as $N$ grows and the sample covariance in \eqref{eq:alg1-cov} approaches $\bbC_y$, it is evident that $\widehat{\bbu}$ as computed in Algorithm~\ref{alg:bci} is a consistent estimator of the true centrality eigenvector $\bbu$.
Hence, for a large enough sample size $N$, Algorithm~\ref{alg:bci} is guaranteed to return the true centrality ranking.
However, the true practical value lies in the finite rate analysis, i.e., given a \emph{finite} sample size $N$ and two nodes of interest, when can we confidently state that we have recovered the true centrality ordering between these nodes?
We answer this question next.

\begin{algorithm}[tb]
  \caption{Blind centrality inference}\label{alg:bci}
  \begin{algorithmic}[1]
    \STATE \textbf{INPUT:} graph signals $\{{\bby}_i\}_{i=1}^N$
    \STATE Compute the sample covariance matrix
    \begin{equation}
      \label{eq:alg1-cov}
      \widehat{\bbC}_y^N:=\frac{1}{N}\sum_{i=1}^N{\bby}_i{\bby}_i^\top.
    \end{equation}
    \STATE Compute the leading eigenvector $\widehat{\bbu}$ of $\widehat{\bbC}_y^N$
    \STATE \textbf{OUTPUT:} Centrality ranking induced by $\widehat{\bbu}$
  \end{algorithmic}
\end{algorithm}

\begin{figure*}[t]
  \centering
  \resizebox{\linewidth}{!}{\includegraphics[width=\linewidth]{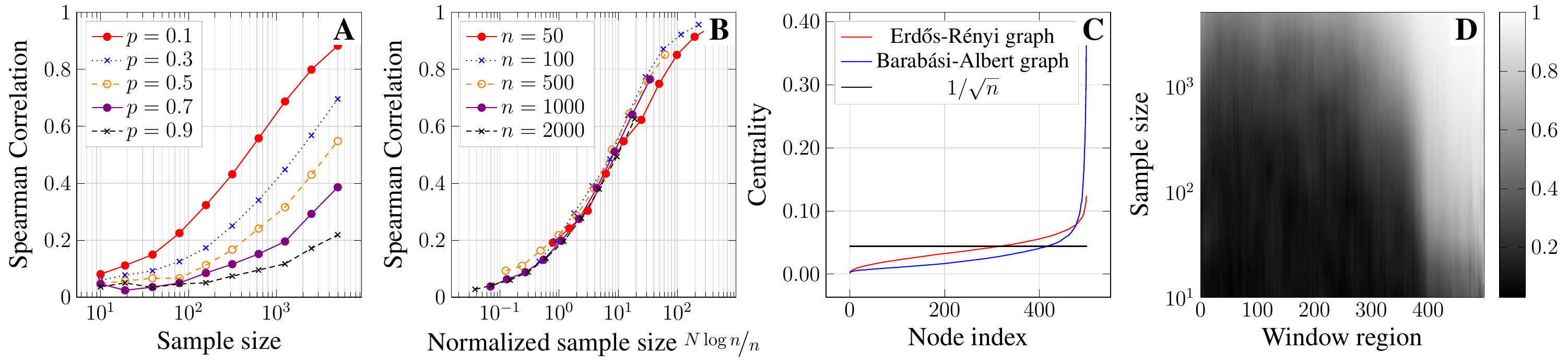}}
\vspace{-4mm}
  \caption{Blind centrality inference on synthetic graphs. Each plot is the average over 10 runs.
    (A) Spearman correlation for fixed graph size $n=500$ with varying edge probability $p$.
    (B) Spearman correlation for edge probability $p=4\log n/n$ with varying graph size $n$, plotted against the normalized sample size $N\log n/n$.
    (C) Comparison of eigenvector centrality for \ER graph with $p=\log n/n$ and \BA graph with $m=m_0=3$, where both graphs are of size $n=500$. Notice that both sets of parameters yield graphs of similar density.
    (D) Blind centrality inference for \BA graphs with $m=m_0=4$ and $n=500$.
    For each sample size, the Spearman correlation is evaluated for a windowed region ($\text{width}=100$) with respect to the true ordered eigenvector centrality.
  }\label{fig:exp-er}
\vspace{-2mm}
\end{figure*}

\section{Finite rate analysis}\label{sec:ecp}

We present our main result on the performance of Algorithm~\ref{alg:bci}, where we state sufficient sampling conditions to preserve (with high probability) the relative centrality ordering of nodes.
In stating this result, we use the notation $\| \cdot \|_p$ to denote the $\ell_p$ norm when the argument is a vector and the related induced norm when the argument is a matrix.
Also, we denominate the eigenvalues and eigenvectors of $\bbC_y$ by $\beta_1\geq\ldots\geq\beta_n\geq 0$ and $\bbz_1,\bbz_2,\ldots,\bbz_n$, respectively. From our discussion following \eqref{eq:covariance} it holds that $\bbz_1 = \bbu$.

\begin{Theorem}\label{thm:main}
  Define $\mu:=n\| \bbu \|_\infty^2$, $\bbE = \bbC_y-\widehat{\bbC}_y^N$, and $\kappa=\|\bbC_y - \beta_1 {\bbu} {\bbu}^\top \|_\infty$.
  Assume that $\| \bby_i \|^2_2 \leq m$ for some constant $m$ and $\beta_1-\kappa=\Omega\left(\mu^2\|\bbE\|_\infty\right)$. 
  Consider any pair of nodes $i,j$ where $| u_i - u_j |>\alpha$. If
  \begin{equation}\label{eq:main}
  N\geq C\mu^4\left(\frac{t}{\alpha}\right)^2\frac{m\log n}{\beta_1(1-\kappa/\beta_1)^2}, \\
  \end{equation}
  for some absolute constant $C$ and $t>0$, then $\bbu$ and $\widehat{\bbu}$ preserve the relative order of $i$ and $j$ with probability at least $1-n^{-t^2}$.
\end{Theorem}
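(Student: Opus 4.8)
The key is to control the perturbation of the leading eigenvector $\hbu$ of $\widehat{\bbC}_y^N$ relative to $\bbu$, and then translate an $\ell_\infty$ bound on $\|\hbu - \bbu\|_\infty$ into a guarantee that the relative order of entries $i$ and $j$ is preserved. First I would establish the entrywise eigenvector perturbation bound: if $\|\hbu - \bbu\|_\infty < \alpha/2$ and $|u_i - u_j| > \alpha$, then (up to a sign ambiguity, which is fixed by $\gamma_k \geq 0$ making both leading eigenvectors nonnegative) we have $\sign(\hat u_i - \hat u_j) = \sign(u_i - u_j)$, hence the rankings agree on $\{i,j\}$. So the whole theorem reduces to showing that \eqref{eq:main} forces $\|\hbu - \bbu\|_\infty \leq \alpha/(2t)$ — roughly — with probability $\geq 1 - n^{-t^2}$.

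\textbf{Two-step perturbation analysis.} To bound $\|\hbu - \bbu\|_\infty$ I would proceed in two stages. (i) \emph{Concentration of the sample covariance.} Since $\|\bby_i\|_2^2 \leq m$, each summand $\bby_i \bby_i^\top$ in \eqref{eq:alg1-cov} is a PSD matrix of bounded spectral norm, so a matrix Bernstein / matrix Hoeffding inequality yields $\|\bbE\|_2 = \|\bbC_y - \widehat{\bbC}_y^N\|_2 \lesssim t\sqrt{m \log n / N}$ with probability at least $1 - n^{-t^2}$; an analogous bound holds for $\|\bbE\|_\infty$ up to $\sqrt{n}$ or $\log n$ factors, which is where the $\mu$-type terms will enter. (ii) \emph{Entrywise eigenvector perturbation.} Here I would invoke an $\ell_\infty$ Davis–Kahan-type bound — the kind developed by Fan–Wang–Zhong or Eldridge–Belkin–Wang — which controls $\|\hbu - \bbu\|_\infty$ in terms of $\|\bbE\|_\infty$, the eigengap $\beta_1 - \beta_2$, the coherence $\mu = n\|\bbu\|_\infty^2$, and the ``residual'' $\kappa = \|\bbC_y - \beta_1 \bbu\bbu^\top\|_\infty$. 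The hypothesis $\beta_1 - \kappa = \Omega(\mu^2\|\bbE\|_\infty)$ is precisely the regime in which such a bound is non-vacuous, and it should give something like $\|\hbu - \bbu\|_\infty \lesssim \mu^2 \|\bbE\|_\infty / (\beta_1 - \kappa)$. Substituting the concentration bound for $\|\bbE\|_\infty$ and the requirement that this be $\leq \alpha/(2t)$, then solving for $N$, should reproduce \eqref{eq:main} with the displayed dependence on $\mu^4$, $(t/\alpha)^2$, $m\log n$, and $\beta_1(1-\kappa/\beta_1)^2$.

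\textbf{Chaining the pieces.} Concretely: (1) fix the good event on which the matrix concentration bound holds, of probability $\geq 1 - n^{-t^2}$; (2) on that event, plug the bound $\|\bbE\|_\infty \lesssim t\sqrt{m\log n/N}$ (times coherence factors) into the $\ell_\infty$ eigenvector bound to get $\|\hbu - \bbu\|_\infty \lesssim \mu^2 t \sqrt{m\log n/N}\,/(\beta_1 - \kappa)$; (3) observe that \eqref{eq:main}, i.e. $N \geq C\mu^4 (t/\alpha)^2 m\log n / (\beta_1(1-\kappa/\beta_1)^2)$, is exactly the condition making the right-hand side of (2) at most $\alpha/2$ (absorbing constants into $C$); (4) apply the deterministic sign-preservation argument from the first paragraph. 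One should also note that $\beta_1 \bbu\bbu^\top$ is the best rank-one approximation of $\bbC_y$ and that $\beta_1 - \kappa \leq \beta_1 - \beta_2$ relates $\kappa$ to the usual eigengap, so the stated hypotheses are the natural ones.

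\textbf{Main obstacle.} The hard part is step (ii): obtaining a clean $\ell_\infty$ (rather than $\ell_2$) eigenvector perturbation bound with the right coherence dependence. The vanilla Davis–Kahan theorem only controls $\|\hbu - \bbu\|_2$, which is too weak to compare individual coordinates $u_i$ versus $u_j$ when the gap $\alpha$ is small; one genuinely needs the entrywise theory, and keeping track of how $\mu$, $\kappa$, and $\|\bbE\|_\infty$ combine — and why the condition $\beta_1 - \kappa = \Omega(\mu^2 \|\bbE\|_\infty)$ is the precise threshold for the Neumann-series expansion of $\hbu$ to converge entrywise — is the delicate step. The matrix concentration in step (i) and the final sign argument are comparatively routine.
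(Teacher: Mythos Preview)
Your proposal is correct and follows essentially the same route as the paper's proof: the paper also (i) bounds $\|\bbE\|_\infty$ via a sample-covariance concentration inequality (Vershynin's Corollary~5.5.2, combined with $\|\bbE\|_\infty\leq\sqrt{n}\,\|\bbE\|_2$) to hold with probability at least $1-n^{-t^2}$, (ii) feeds this into the $\ell_\infty$ eigenvector perturbation bound of Fan--Wang--Zhong \cite{fan2018perturbation} to obtain $\|\bbu-\widehat{\bbu}\|_\infty\leq C_1\mu^2\epsilon\beta_1/(\beta_1-\kappa)$, and (iii) solves for $N$ so that this is at most $\alpha/2$, then invokes the deterministic sign-preservation argument. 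The only minor slip in your write-up is the target ``$\leq\alpha/(2t)$'' in the first paragraph---it should be $\alpha/2$, as you correctly have it later.
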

\begin{proof} \emph{(sketch)}
  First, we bound the infinity-norm of the matrix $\bbE$.
  By~\cite[Corollary 5.5.2]{vershynin2010introduction} and the equivalence of norms, under our assumptions, the following holds with probability at least $1-n^{-t^2}$:
  \begin{equation}
    \label{eq:normbound}
    \begin{aligned}
    &\text{If }N\geq C_0\mu^4\left(\frac{t}{\alpha'}\right)^2\frac{4\beta_1}{(\beta_1-\kappa)^2}m\log n \\
    &\text{then }\|E\|_\infty\leq\sqrt{n}\|E\|_2\leq\epsilon\sqrt{n}\beta_1
  \end{aligned}
  \end{equation}
  where $\epsilon=\frac{\alpha'(\beta_1-\kappa)}{2\beta_1\mu^2}$,
  and $C_0$ is an absolute constant.
  It follows from~\cite[Theorem 2.1]{fan2018perturbation}, up to the sign of $\widehat{\bbu}$,
  \begin{equation}
    \label{eq:perturbboundconst}
    \|\bbu-\widehat{\bbu}\|_\infty\leq\frac{C_1\mu^2\epsilon\beta_1}{\beta_1-\kappa},
  \end{equation}
  for some constant $C_1$.
  Furthermore, by applying the substitutions $\epsilon=\frac{\alpha'(\beta_1-\kappa)}{2\beta_1\mu^2}$, $C=4C_0C_1^2$, and $\alpha=C_1\alpha'$ in Equations~\ref{eq:normbound}~and~\ref{eq:perturbboundconst}, we get
  \begin{equation}
    \label{eq:mainproven}
    \begin{aligned}
    &\text{If }N\geq C\mu^4\left(\frac{t}{\alpha}\right)^2\frac{m\log n}{\beta_1(1-\kappa/\beta_1)^2} \\
    &\text{then }\|\bbu-\widehat{\bbu}\|_\infty\leq\frac{\alpha}{2}.
    \end{aligned}
  \end{equation}
  This completes the proof, as an element-wise perturbation of magnitude at most $\alpha/2$ is guaranteed to preserve the ordering of nodes whose centralities differ by more than $\alpha$.
\end{proof}

Theorem~\ref{thm:main} characterizes the sampling requirements of Algorithm~\ref{alg:bci} in terms of a desired \emph{resolution} $\alpha$, i.e., we determine the samples required to correctly order two nodes that differ by at least $\alpha$ in their centrality values.
As $\alpha$ decreases, the sampling requirement increases by a factor of $1/\alpha^2$, reflecting the difficulty of differentiating nodes whose centralities are very close together.
The assumption of ${\bm y}$ being bounded in a Euclidean ball of radius $\sqrt{m}$ holds for bounded inputs to a graph filter.
Considering the example of opinions in a social network, it is reasonable to assume that the measured graph signal is bounded, i.e. there is a limit to the extremes in opinion dynamics.

The result in Theorem~\ref{thm:main} differs from existing work in blind network inference \cite{schaub2018blind,wai2018blind}, where the performance of similar approaches for community detection are justified in terms of the alignment of principal subspaces between the true and sample covariance matrices, which is characterized by the Davis-Kahan $\sin\Theta$ Theorem \cite{daviskahan1970}.
For the ranking problem, we are concerned with element-wise perturbations of the leading eigenvectors, requiring the use of more modern statistical results \cite{fan2018perturbation}.

\subsection{Graph spectra and eigenvector delocalization}\label{sec:randomgraphs}

The spectrum of the covariance is determined by the spectrum of (the adjacency matrix of) the graph itself and the frequency response of the filter.
To illustrate this, we consider an \ER graph of $n$ nodes, where each edge exists independently with probability $p$.
In the dense regime, where $pn\geq C\log n, C\geq 1$, it is well-known that as the graph becomes sufficiently large, the leading eigenvalue converges to $pn$, and $\lambda_2({\bbA})={\cal O}(\sqrt{pn})$ \cite{farkas2001spectra}.
For simplicity, consider the case where the graph filter is equal to the adjacency matrix: ${\cal H}(\bbA)={\bbA}$.
So, the covariance matrix is the square of the adjacency matrix, and thus has leading eigenvalue $\beta_1\approx p^2n^2$.

In order to bound $\kappa$ in~\eqref{eq:main} for this specific graph type we apply the equivalence of norms to yield
\begin{equation}
  \label{eq:er-kappa}
 \kappa=\|\bbC_y-\beta_1\bbu\bbu^\top\|_\infty\leq\sqrt{n} \beta_2 ={\cal O}(n^{3/2}p).
\end{equation}
We emphasize that the graph filter has significant influence on $\kappa$; a filter that strongly decreases the ratio $\beta_2/\beta_1$ will improve the performance of Algorithm~\ref{alg:bci}.
For instance, an `ideal filter' that annihilates the lower spectrum of the adjacency matrix will yield optimal performance, since the ratio $\kappa/\beta_1$ would be zero in~\eqref{eq:main}.

We proceed to analyze the value $\mu=n\|\bbu\|_\infty^2$.
This measures concentrated the eigenvector centrality is in the most central node, i.e., $\mu$ reflects the localization of the leading eigenvector.
We consider the result of~\cite{he2019delocalization} which shows that in the dense regime, $\|\bbu\|_\infty^2\leq n^{-1+g(n)}$, for some $g(n)=o(1)$.
Thus, $\mu\leq n^{g(n)}$, so the eigenvector centrality becomes increasingly delocalized as $n\to\infty$.
Moreover, the largest possible gap between any nodes $i,j$ is attained when a node has eigenvector centrality $0$.
Thus, the largest meaningful $\alpha$ in the context of Theorem~\ref{thm:main} is less than $n^{g(n)/2-1/2}$, since $\|\bbu\|_\infty\leq \sqrt{n^{-1+g(n)}}=\sqrt{n^{g(n)}}/\sqrt{n}=n^{g(n)/2-1/2}$.
So, to preserve the order of the \textit{most central} and \textit{least central} nodes in this scenario, it can be shown that
\begin{equation}
  \label{eq:er-localized-sampling}
  N\geq Ct^2m\frac{n^{o(1)+1}}{\log n}
\end{equation}
samples are sufficient, under the same conditions as Theorem~\ref{thm:main}.
This quantity tends to infinity with $n$, reflecting the increasing difficulty of the ranking problem as the eigenvector centrality becomes increasingly delocalized.
We further assess empirically the tightness of bound \eqref{eq:er-localized-sampling} in our numerical experiments.

\begin{figure}[t!]
  \centering
  \resizebox{0.6\linewidth}{!}{\includegraphics[width=\linewidth]{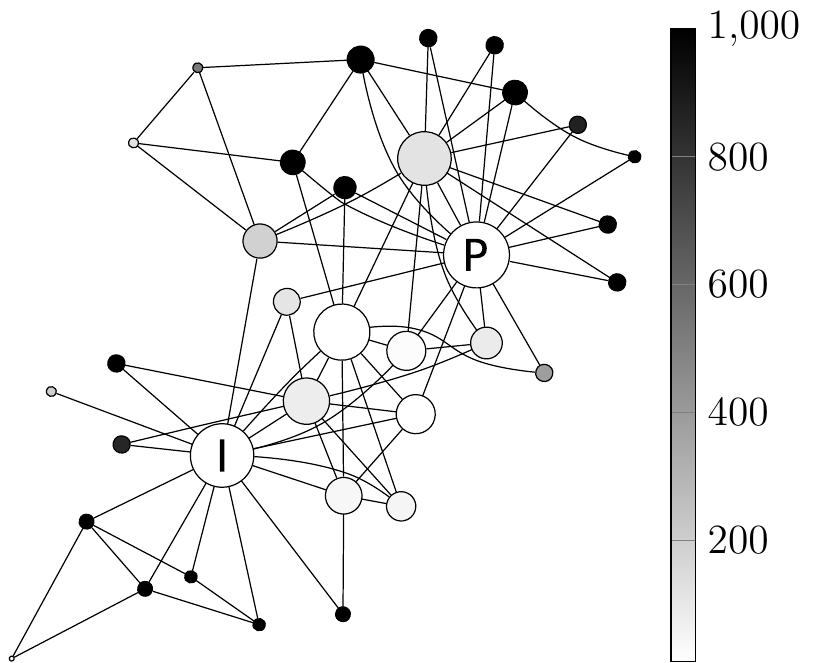}}
\vspace{-2mm}
  \caption{
    Karate club network, with the president (P) and instructor (I) labelled.
    Nodes are scaled proportionally to their eigenvector centrality, and colored according to the number of samples sufficient for proper ranking.
    }\label{fig:exp-zkc}
\vspace{-2mm}
\end{figure}

\section{Numerical experiments}\label{sec:exp}

We demonstrate the performance of our algorithm on \ER and \BA graph models, as well as the karate club network~\cite{albert2002statistical,karate,Newman2010}.
The graph filter used is of the form
\begin{equation}\label{eq:exp-filter}
  {\cal H}\left({\bbA}\right)=\sum_{k=0}^4\left(\frac{{\bbA}}{\lambda_1({\bbA})}\right)^k.
\end{equation}
We evaluate our results with the absolute value of the Spearman correlation, a measure of monotonicity between two ranking functions~\cite{cordernonparam}.

\textbf{\ER graphs.}
We consider an \ER random graph model with graph size $n$ and edge probability $p$.
We first show how the edge probability influences the performance of Algorithm~\ref{alg:bci} in the ranking task.
For fixed $n=500$, we vary $p$ from $0.1$ to $0.9$, and compute the Spearman correlation of the inferred eigenvector centrality against the true eigenvector centrality for an increasing number of samples $N$.

As pictured in Fig.~\letterref{fig:exp-er}{A}, the performance of the algorithm varies inversely with the parameter $p$.
This is expected, as dense graphs have more delocalized eigenvectors, and thus have small average distances between centralities.
Then, by Theorem~\ref{thm:main}, the ranking problem becomes increasingly difficult due to small values of $\alpha$.

Looking past the edge probability, we consider the relationship between the number of samples $N$ and the graph size $n$ in the dense \ER graph regime.
For each $n$, we set $p=4\log n/n$.
We consider the algorithm performance against the number of samples taken, normalized by $\log n/n$, by the derivation in \eqref{eq:er-localized-sampling}.
Fig.~\letterref{fig:exp-er}{B} shows that for each graph size $n$, the performance closely matches over the normalized sample size $N\log n/n$, as expected.
Barring a slight deviation for small $n$, due to the unknown $n^{o(1)}$ term in \eqref{eq:er-localized-sampling} unaccounted for by the normalized sample size, this demonstrates the tightness of the sampling requirements for dense \ER graphs and, for this setting, confirms the practical validity of Theorem~\ref{thm:main}.

\textbf{\BA graphs.}
To illustrate our algorithm on a more realistic type of graph, we consider a \BA graph model of size $n=500$ with parameter $m=m_0=3$~\cite{albert2002statistical}.
As shown in Fig.~\letterref{fig:exp-er}{C}, while \ER graphs have centralities that are all very close to $1/\sqrt{n}$, \BA graphs are characterized by a highly localized centrality structure.

Due to this unevenness in the distribution of the eigenvector centrality, we evaluate the Spearman correlation over a sliding window, i.e., for groups of nodes with consecutive ground-truth centrality rankings.
This highlights the varying sensitivity to perturbation based on differences in centrality, as described in Theorem~\ref{thm:main}.

As shown in Fig.~\letterref{fig:exp-er}{D}, performance over every region of the graph (where ``region'' refers to nodes of similar centrality) improves as the number of samples increases, with the most central nodes being properly ranked with fewer samples than the rest.
In particular, for the top 100 nodes (nodes 400-500), we achieve a Spearman correlation of 0.84 with as few as $\approx 600$ samples, while the next lower block (nodes 300-400) needs 5000 samples to achieve a Spearman correlation of 0.81.
This illustrates the highly centralized nature of the \BA model, reflected in the power-law distribution of node degrees~\cite{albert2002statistical}.
The most central nodes in the graph are more separated from other nodes on average, compared to the nodes with lower eigenvector centrality (see Figure~\letterref{fig:exp-er}{C}), making the ranking task easier for these regions of the graph.
In terms of Theorem~\ref{thm:main}, the sampling requirements here are low due to a large tolerance $\alpha$.

\textbf{Zachary's Karate Club.}
Finally, we show the application of Algorithm~\ref{alg:bci} to the karate club network~\cite{karate,girvan2002community}.
To evaluate its performance, for an inferred ranking $\widehat{r}$, we consider node $i$ to be correctly ranked with respect to the true centrality ranking $r$ if $|\widehat{r}_i-r_i|\leq 1$.
Moreover, we consider a number of samples $N$ to be sufficient to rank node $i$ if for all sample counts $N'$ such that $1000\geq N'\geq N$, node $i$ is ranked correctly with probability greater than $0.95$.
If this is not attained, we saturate the required number of samples at 1000.

The results of this experiment are shown in Fig.~\ref{fig:exp-zkc}.
It is clear that the nodes with higher eigenvector centrality require fewer samples to be properly ranked, as they are more separated from other nodes.
More precisely, the most central nodes are the Instructor (I) with eigenvector centrality 0.36, and the President (P) with eigenvector centrality 0.37, requiring 80 and 10 samples respectively for proper ranking.
The least central nodes, found on the periphery of the graph, have very similar centralities, and are difficult to distinguish.
For instance, 5 nodes in this network have centrality close to 0.101, making the sampling requirements for distinguishing them from other similarly-ranked nodes very high due to small tolerance~$\alpha$.

\section{Conclusions and future work}\label{sec:disc}

We have considered the problem of ranking the nodes of an unobserved graph based on their eigenvector centrality, given only a set of graph signals regularized by the graph structure.
Using tools from matrix perturbation theory, we characterize the $\ell_\infty$ perturbation of the leading eigenvector of the empirical covariance matrix in terms of the graph filter's spectrum and number of samples taken.
We then present an analysis of dense \ER graphs in this regime, showing the interplay between the ranking problem and eigenvector delocalization.
These theoretical results are then demonstrated on \ER and \BA random graphs, highlighting the influence of the graph structure on the performance of our approach.
Finally, we demonstrate the sampling requirements for approximate ranking on the karate club network.

Future research avenues include the consideration of additional (spectrum-based) centrality measures, such as Katz and PageRank centralities, as well as the relaxation of some of the assumptions in our analysis, such as the whiteness assumption of the inputs in the generation of our observations [cf.~\eqref{eq:graphfilter_2}].

\newpage

\small
\bibliographystyle{IEEEtran}
\bibliography{ref}

\end{document}